\documentclass[11pt]{article} 

\usepackage[a4paper,margin=1in]{geometry} 
\usepackage[T1]{fontenc} 
\usepackage[utf8]{inputenc} 
\usepackage{lmodern} 
\usepackage{microtype} 
\usepackage{amsmath,amssymb,amsthm,mathtools} 
\usepackage{booktabs} 
\usepackage{graphicx} 
\usepackage{placeins} 
\usepackage{enumitem} 
\usepackage[round,authoryear,longnamesfirst]{natbib} 
\usepackage{hyperref} 

\hypersetup{hidelinks} 

\newtheorem{theorem}{Theorem} 
\newtheorem{lemma}[theorem]{Lemma} 
\newtheorem{proposition}[theorem]{Proposition} 
\newtheorem{corollary}[theorem]{Corollary} 
\newtheorem{remark}[theorem]{Remark} 
\newtheorem{example}[theorem]{Example} 
\newtheorem{assumption}[theorem]{Assumption} 

\title{Bridging Conformal Prediction and Scenario Optimization:\\
 Discarded Constraints and Modular Risk Allocation} 
\author{Giuseppe C. Calafiore\thanks{Corresponding author.
Tel.: +39 011 090 7071; 
e-mail: giuseppe.calafiore@polito.it.}\\ 
Department of Electronics and Telecommunications\\ 
Politecnico di Torino\\ 
Corso Duca degli Abruzzi 24, 10129 Torino, Italy} 
\date{} 

\begin{document} 
\maketitle 

\begin{abstract}
Scenario optimization and conformal prediction share a common goal, that is,  turning finite samples into safety margins.
Yet, different terminology often obscures the connection between their respective guarantees.
This paper revisits that connection directly from a systems-and-control viewpoint.
Building on the recent conformal/scenario bridge of \citet{OSullivanRomaoMargellos2026}, we extend the forward direction to feasible sample-and-discard scenario algorithms.
Specifically, if the final decision is determined by a stable subset of the retained sampled constraints, the classical mean violation law admits a direct exchangeability-based derivation.
In this view, discarded samples naturally appear as admissible exceptions.
We also introduce a simple modular composition rule that combines several blockwise calibration certificates into a single joint guarantee.
This rule proves particularly useful in multi-output prediction and finite-horizon control, where engineers must distribute risk across coordinates, constraints, or prediction steps.
Finally, we provide numerical illustrations using a calibrated multi-step tube around an identified predictor.
These examples compare alternative stage-wise risk allocations and highlight the resulting performance and safety trade-offs in a standard constraint-tightening problem.
\end{abstract}

\noindent\textbf{Keywords:} conformal prediction; scenario optimization; data-driven control; stochastic model predictive control; system identification;
discarded constraints

\section{Introduction}

Finite-data methods are becoming indispensable in systems and control for turning raw data into explicit safety margins.
The scenario approach is a classical staple in randomized robust design, scenario-based model predictive control (MPC), and data-driven decision making for uncertain systems \citep{CalafioreCampi2005,CalafioreCampi2006,CampiGaratti2008,CampiGarattiPrandini2009,CalafioreFagiano2013_TAC_ScenarioMPC,CalafioreFagiano2013_Automatica_LPVScenarioMPC,SchildbachFagianoFreiMorari2014}.
In parallel, conformal prediction has rapidly gained traction as a model-agnostic calibration layer.
It converts nominal forecasts into distribution-free prediction sets and sees growing use in motion planning and predictive control with learned trajectory models \citep{ShaferVovk2008,LeiGSellRinaldoTibshiraniWasserman2018,AngelopoulosBates2023,DixitLindemannWeiCleavelandPappasBurdick2023,YuZhaoYinLindemann2026}.
In both domains, the workflow is fundamentally the same: the engineer starts from sampled data, and the  goal is a finite-sample statement about future violations or prediction errors, ideally without relying on a fully specified probabilistic model.
From a control perspective, the two theories look strikingly similar.
Both produce data-driven margins ready to be inserted into optimization, monitoring, or safety-filtering loops. Yet, they speak entirely different languages.
Scenario optimization deals in support constraints, violation probability, and discarded scenarios.
Conformal prediction, on the other hand, speaks of nonconformity scores, calibration, and coverage.
The core mechanism behind both theories is exchangeability. If data points are interchangeable, a novel data point only matters if it changes the compressed representation of the final decision rule, or if it falls among explicitly allowed exceptions.
Earlier control-oriented work already linked scenario methods with compression learning, see \citep{MargellosPrandiniLygeros2015}.
More recently, \citet{OSullivanRomaoMargellos2026} established a direct bridge between conformal prediction and scenario optimization.
Their work already maps both directions of that bridge. In one direction, it starts from vanilla conformal prediction and recovers the classical mean-violation law for fully supported scenario programs without discarding.
In the reverse, it interprets conformal quantile selection as a special one-dimensional scenario program with discarded constraints, recovering standard conformal guarantees from that representation.
The contribution of this present note is more specific. Relative to \citet{OSullivanRomaoMargellos2026}, we add a \emph{forward} conformal-style interpretation for general feasible sample-and-discard design algorithms, which are the exact kind used routinely in controller design and scenario MPC.
Furthermore, we provide a simple \emph{modular} rule for combining several blockwise calibration certificates into a single system-level guarantee.
The value of the first contribution is primarily interpretive rather than mathematically stronger than classical scenario theory: it explains, in conformal language, why discarded scenarios enter the finite-sample argument exactly as admissible exceptions.
The value of the second contribution, however, is highly operational.
It provides a transparent risk-budgeting rule for multi-output predictors, stage-wise tubes, and similar structured control problems.
It is equally important to state what this note does \emph{not} claim.
We do not derive new exact tail laws for discarded convex scenario programs;
obtaining sharp results of that nature requires stronger structural assumptions and is well-covered in the existing scenario literature \citep{RomaoMargellosPapachristodoulou2021,RomaoPapachristodoulouMargellos2023}.
Nor do we circumvent the impossibility of generic conditional guarantees based solely on observed support information \citep{GarattiCampi2023Conditional}.
The central guarantee proved here is an expectation bound under a stable retained-reconstruction assumption.
Deriving sharp PAC-style tail bounds at this same level of generality remains outside our scope.
The formal setting throughout is exchangeable data. This perfectly covers the main use cases considered here: i.i.d.
scenario samples, repeated rollouts, and calibration sets collected from independent operating points.
While online conformal methods for dependent time series are certainly important for control, they lie outside the present scope.
To summarize, our contributions are fourfold:
(i) a bridge lemma isolates the exchangeability mechanism behind reconstruction with admissible exceptions;
(ii) a forward bridge extends the no-discarding interpretation of \citet{OSullivanRomaoMargellos2026} to feasible sample-and-discard algorithms with a stable retained reconstruction set;
(iii) a modular calibration rule combines blockwise certificates into a joint guarantee, yielding explicit risk budgeting across outputs or horizon steps;
and
(iv) numerical illustrations demonstrate how different stage-wise allocations of the same total risk budget induce different calibrated tubes and different control trade-offs.

\section{Background and problem statement} 

\subsection{Conformal prediction, calibration, and conditional risk} 

A random sample $Z_1,\dots,Z_m$ is called \emph{exchangeable} if its joint distribution is unchanged by any permutation of the indices.
Independent and identically distributed samples are the main special case of exchangeable samples.
In conformal prediction, one starts from a nominal model and a scalar score that measures how ``unusual'' a point looks relative to that model.
This score is called a \emph{nonconformity score}. 
A \emph{calibration sample} is a held-out exchangeable sample used only to rank these scores and choose a threshold.
Let $Z$ be a test point and let $S=(Z_1,\dots,Z_m)$ be an exchangeable calibration sample.
A set-valued predictor is a measurable map $\Gamma(S)$ that returns a subset of the space of possible test points.
Its conditional violation probability is 
\[
V(S):=\mathbb{P}\{Z\notin \Gamma(S)\mid S\}.
\]
The most common conformal guarantee is marginal validity, namely $\mathbb{P}\{Z\in \Gamma(S)\}\ge 1-\alpha$.
In this paper we also use a PAC-type statement on the random conditional risk, 
\[
\mathbb{P}_S\{V(S)\le \epsilon\}\ge 1-\delta,
\]
which we call a \emph{calibration-conditional certificate}.
This is a probability statement over the random calibration sample $S$: with confidence at least $1-\delta$, the conditional risk after calibration is at most $\epsilon$.
Related notions of conditional validity are known to be subtler than marginal validity \citep{Vovk2013}.
In the basic split-conformal construction, one first fits a nominal predictor on training data independent of the calibration sample, then computes nonconformity scores on the calibration data, sorts them, and selects an order statistic.
For control problems this calibration step can be read as a direct data-driven selection of a safety margin from held-out prediction errors \citep{LeiGSellRinaldoTibshiraniWasserman2018,AngelopoulosBates2023}.
For example, if an identified model predicts a scalar output as $\hat y=f_{\mathrm{id}}(\xi)$ from a regressor $\xi$, then a natural score is the absolute residual $R=|y-\hat y|$.
Choosing a threshold $q$ from the ordered calibration residuals yields the calibrated interval $[\hat y-q,\hat y+q]$.

\subsection{Scenario optimization with discarded constraints} 

Let $S=(\delta_1,\dots,\delta_m)$ be an i.i.d.
sample of scenarios drawn from a probability measure on a space $\Delta$.
A sample-and-discard algorithm chooses a set of discarded indices $D_r(S)\subseteq [m]$, with $|D_r(S)|=r$, and returns a decision $x_r(S)$ that satisfies all retained constraints, 
\[
g(x_r(S),\delta_i)\le 0,\qquad i\notin D_r(S).
\]
Its out-of-sample violation probability is 
\[
V_r(S):=\mathbb{P}\{g(x_r(S),\delta)>0\mid S\},
\]
where $\delta$ is an independent fresh scenario.
Convex scenario programs are the area in which
tail bounds for $V_r(S)$ have been developed first, see \cite{Calafiore2010,CampiGaratti2011}.
In control, this framework  appears in sampled robust design and in finite-horizon scenario MPC formulations \citep{CalafioreCampi2006,CalafioreFagiano2013_TAC_ScenarioMPC,SchildbachFagianoFreiMorari2014}.
The result below needs only feasibility and a suitable reconstruction property, not convexity itself.
The role of support constraints is played here by a retained reconstruction set, namely a subset of retained sampled constraints that already determines the final decision.

\begin{remark}\rm
A retained reconstruction set can be understood as follows: among all retained sampled constraints, only a few are actually needed to pin down the final decision.
Stability means that removing any other retained sample does not change that decision.
This is the same compression mechanism that often sits behind classical scenario bounds.
\end{remark}

\begin{assumption}[Stable retained reconstruction set]\label{ass:stable-support}
For each sample $S$, the algorithm returns measurable, permutation-equivariant maps $D_r(S)\subseteq [m]$ and $C_r(S)\subseteq [m]\setminus D_r(S)$ such that: 
\begin{enumerate}[label=(\alph*),leftmargin=1.8em]
\item the decision $x_r(S)$ satisfies all retained constraints, namely $g(x_r(S),\delta_i)\le 0$ for every $i\notin D_r(S)$;
\item if $j\notin D_r(S)\cup C_r(S)$, then removing the $j$th sample does not change the decision, 
\[
x_r(S^{-j})=x_r(S),
\]
where $S^{-j}$ is the sample obtained from $S$ by removing $\delta_j$.
\end{enumerate}
\end{assumption}

Permutation-equivariant means that if the sample order is relabeled, the selected discarded and reconstruction indices are relabeled in the same way.
Assumption~\ref{ass:stable-support} is not automatic for arbitrary removal heuristics.
It is satisfied, however, by many retained convex scenario programs with a unique optimizer and measurable tie-breaking, in which case $C_r(S)$ may be chosen as any measurable support or compression set of the retained problem.
The set need not be unique; any measurable, permutation-equivariant selection is enough.
If its size is bounded by a constant $\zeta$, then the results below give a bound in terms of $r+\zeta$.
In the fully supported convex case one typically has $\zeta=d$, the number of decision variables \citep{CampiGaratti2018}.

\section{A bridge lemma: reconstruction with admissible exceptions} 

The next lemma is the key structural step.
It says that a future point can be missed only if, in the augmented sample, that point becomes either one of the samples needed to reconstruct the predictor or one of the samples that are explicitly allowed to be exceptional.

\begin{lemma}[Exchangeable reconstruction with admissible exceptions]\label{lem:bridge}
Let $Z_1,\dots,Z_{m+1}$ be exchangeable random elements of a measurable space $\mathcal{Z}$.
For each $n\ge 1$, let 
\[
B_n(Z_{1:n})\subseteq [n],\qquad E_n(Z_{1:n})\subseteq [n]
\]
be measurable, permutation-equivariant random index sets, interpreted respectively as a reconstruction set and an exception set.
Let $\mathcal{R}$ be a measurable map that associates with each finite tuple of points in $\mathcal{Z}$ a measurable subset of $\mathcal{Z}$.
Assume that, for every realization $z_{1:n}\in \mathcal{Z}^n$, 
\begin{enumerate}[label=(\alph*),leftmargin=1.8em]
\item if $i\notin E_n(z_{1:n})$, then $z_i\in \mathcal{R}(z_{B_n(z_{1:n})})$;
\item if $j\notin B_n(z_{1:n})\cup E_n(z_{1:n})$, then 
\[
\mathcal{R}(z_{B_n(z_{1:n})})=
\mathcal{R}\bigl((z^{-j}_{1:n})_{B_{n-1}(z^{-j}_{1:n})}\bigr),
\]
where $z^{-j}_{1:n}$ is the sample obtained by removing the $j$th point.
\end{enumerate}
Define the set predictor
\[
\Gamma(z_{1:m}):=\mathcal{R}(z_{B_m(z_{1:m})}). 
\]
Then
\[
\mathbb{P}\{Z_{m+1}\notin \Gamma(Z_{1:m})\}
\le
\frac{1}{m+1}\,\mathbb{E}\bigl[|B_{m+1}(Z_{1:m+1})\cup E_{m+1}(Z_{1:m+1})|\bigr].
\]
In particular, if $|B_n|\le \kappa$ and $|E_n|\le q$ almost surely for all $n$, then
\[
\mathbb{P}\{Z_{m+1}\notin \Gamma(Z_{1:m})\}\le \frac{\kappa+q}{m+1}.
\]
\end{lemma}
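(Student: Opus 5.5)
The plan is to run the classical leave-one-out exchangeability argument on the augmented sample $Z_{1:m+1}$. For $j\in[m+1]$, let $Z^{-j}_{1:m+1}$ denote the augmented sample with the $j$th point removed, and define the event
\[
A_j:=\bigl\{Z_j\notin \mathcal{R}\bigl((Z^{-j}_{1:m+1})_{B_m(Z^{-j}_{1:m+1})}\bigr)\bigr\},
\]
which says that point $j$ is missed by the predictor built from the other $m$ points. For $j=m+1$ this is exactly the event $\{Z_{m+1}\notin\Gamma(Z_{1:m})\}$.

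The first step is a deterministic inclusion, valid for every realization $z_{1:m+1}$:
\[
A_j\subseteq\{j\in B_{m+1}\cup E_{m+1}\}.
\]
I will prove the contrapositive. Suppose $j\notin B_{m+1}(z_{1:m+1})\cup E_{m+1}(z_{1:m+1})$. Since $j\notin E_{m+1}$, hypothesis (a) with $n=m+1$ gives $z_j\in\mathcal{R}(z_{B_{m+1}(z_{1:m+1})})$. Since $j\notin B_{m+1}\cup E_{m+1}$, hypothesis (b) with $n=m+1$ says this set equals $\mathcal{R}\bigl((z^{-j}_{1:m+1})_{B_m(z^{-j}_{1:m+1})}\bigr)$. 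Therefore $z_j$ lies in the leave-$j$-out predictor, and $A_j$ fails. Summing indicators over $j$ then yields, pointwise,
\[
\sum_{j=1}^{m+1}\mathbf 1_{A_j}\le |B_{m+1}\cup E_{m+1}|.
\]

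The second step uses exchangeability to show that $\mathbb{P}(A_j)$ does not depend on $j$. Let $\pi$ be the transposition of $j$ and $m+1$. Then $\mathbf 1_{A_j}(z_{1:m+1})=\mathbf 1_{A_{m+1}}(z_{\pi(1:m+1)})$, provided that removing $z_j$ and keeping the remaining points in their original order gives the same input to $\Gamma$, up to a permutation, as removing the last point of the permuted tuple. The two tuples of remaining points differ only by a reordering. So this requires $\Gamma$ to be permutation-invariant, and that follows from the permutation-equivariance of $B_m$: relabeling the sample relabels the selected indices consistently, so the same multiset of points is selected, and $\mathcal{R}$ is applied to it. Exchangeability then gives $\mathbb{P}(A_j)=\mathbb{P}(A_{m+1})$.

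Taking expectations in the pointwise sum gives
\[
(m+1)\,\mathbb{P}(A_{m+1})\le \mathbb{E}|B_{m+1}\cup E_{m+1}|,
\]
which is the claimed bound. The bounded case follows from $|B\cup E|\le|B|+|E|\le\kappa+q$.

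I expect the main obstacle to be the invariance step. Equivariance of $B_m$ guarantees that the same points are selected, but they may come out in a different order, so I also need $\mathcal{R}$ to depend on the selected points only through the (multi)set. I will read the construction as $\mathcal{R}$ being symmetric in its arguments, since "$z_B$" is naturally a set of points, or else assume that $\mathcal{R}$ is applied to the selected points in a canonical order. Measurability of the events $A_j$ follows from the measurability assumptions on $B_n$, $E_n$ and $\mathcal{R}$.
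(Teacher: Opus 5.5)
\textbf{Gap in the symmetrization step.} Your first step is correct, and its $j=m+1$ instance is exactly the paper's contradiction argument. The problem is the second step as you set it up. With the transposition of $j$ and $m+1$, the remaining points reach $\Gamma$ in a different order, so you need $\Gamma$ to be permutation-invariant. The lemma does not give you that: $\mathcal{R}$ acts on ordered tuples, and equivariance of $B_m$ only guarantees that the same points are selected. Reading $\mathcal{R}$ as symmetric, as you propose, therefore proves a weaker lemma than the one stated.

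\textbf{The fix.} The extra assumption is avoidable. Instead of the transposition, use the order-preserving shift $\sigma_j$ defined by $(z_{\sigma_j(1)},\dots,z_{\sigma_j(m+1)})=(z_1,\dots,z_{j-1},z_{j+1},\dots,z_{m+1},z_j)$. With $h(w_{1:m+1}):=\mathbf{1}\{w_{m+1}\notin\Gamma(w_{1:m})\}$ you have $\mathbf{1}_{A_j}(z)=h(z_{\sigma_j(1)},\dots,z_{\sigma_j(m+1)})$ identically. This holds because the first $m$ entries of the shifted tuple are exactly $z^{-j}_{1:m+1}$ in its original order, which is the same tuple used in hypothesis (b) and in your definition of $A_j$. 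Exchangeability of $Z_{1:m+1}$ alone then gives $\mathbb{P}(A_j)=\mathbb{P}(A_{m+1})$, and the rest of your argument goes through unchanged.

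\textbf{Comparison with the paper.} The paper sidesteps the ordering issue differently. It uses only the $j=m+1$ inclusion $\{Z_{m+1}\notin\Gamma(Z_{1:m})\}\subseteq\{m+1\in B_{m+1}(T)\cup E_{m+1}(T)\}$. It then symmetrizes the \emph{index} events $\{j\in B_{m+1}(T)\cup E_{m+1}(T)\}$ rather than the miss events. Since only index sets are being permuted there, a transposition is harmless. Permutation-equivariance of $B_{m+1}$ and $E_{m+1}$ then yields the exact identity $\mathbb{P}\{m+1\in B_{m+1}\cup E_{m+1}\}=\frac{1}{m+1}\mathbb{E}|B_{m+1}\cup E_{m+1}|$. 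Your repaired route is a genuine alternative. It needs the inclusion for every leave-$j$-out predictor and a pointwise counting bound. In exchange, it never uses the equivariance of $B_n$ and $E_n$: your first step is deterministic and the second is pure exchangeability. So it delivers the same final bound under weaker hypotheses than the paper's argument.
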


\begin{proof}
Let $T=(Z_1,\dots,Z_m,Z_{m+1})$.
Suppose that $Z_{m+1}\notin \Gamma(Z_{1:m})$.
We claim that $m+1\in B_{m+1}(T)\cup E_{m+1}(T)$.
If not, then by the stability property,
\[
\mathcal{R}(T_{B_{m+1}(T)})=
\mathcal{R}\bigl((Z_1,\dots,Z_m)_{B_m(Z_{1:m})}\bigr)=\Gamma(Z_{1:m}). 
\]
Since $m+1\notin E_{m+1}(T)$, the consistency property gives
\[
Z_{m+1}\in \mathcal{R}(T_{B_{m+1}(T)})=\Gamma(Z_{1:m}),
\]
which is a contradiction.
Therefore,
\[
\{Z_{m+1}\notin \Gamma(Z_{1:m})\}\subseteq \{m+1\in B_{m+1}(T)\cup E_{m+1}(T)\}. 
\]
By exchangeability and permutation equivariance,
\[
\mathbb{P}\{m+1\in B_{m+1}(T)\cup E_{m+1}(T)\}
=
\frac{1}{m+1}\,\mathbb{E}\bigl[|B_{m+1}(T)\cup E_{m+1}(T)|\bigr].
\]
This proves the first claim.
The second follows from $|B_{m+1}\cup E_{m+1}|\le |B_{m+1}|+|E_{m+1}|\le \kappa+q$. 
\end{proof}

\begin{remark}\rm
Lemma~\ref{lem:bridge} is used here as a bridge tool.
When $E_n=\varnothing$, it reduces to the stable-compression logic studied in the learning literature \citep{HannekeKontorovich2021,CampiGaratti2023}.
The point is not to present the lemma as a new learning-theoretic result, but to isolate the exchangeability mechanism that allows a conformal interpretation of control-oriented scenario programs with admissible exceptions.
\end{remark}

\section{Forward bridge for discarded scenarios} 

We now specialize Lemma~\ref{lem:bridge} to sample-and-discard algorithms.
\begin{theorem}[Forward bridge with discarded constraints]\label{thm:forward-discard}
Let $S=(\delta_1,\dots,\delta_m)$ be an i.i.d. 
scenario sample, let $\delta$ be an additional independent scenario, and suppose that Assumption~\ref{ass:stable-support} holds.
Then
\[
\mathbb{E}[V_r(S)]
\le
\frac{1}{m+1}\,\mathbb{E}\bigl[|D_r(S^+)\cup C_r(S^+)|\bigr],
\qquad S^+=(\delta_1,\dots,\delta_m,\delta).
\]
Consequently,
\[
\mathbb{E}[V_r(S)]
\le
\frac{r+\mathbb{E}[|C_r(S^+)|]}{m+1}. 
\]
If $|C_r(S)|\le \zeta$ almost surely for all samples, then 
\[
\mathbb{E}[V_r(S)]\le \frac{r+\zeta}{m+1}.
\]
In the fully supported convex case, one may take $\zeta=d$ and recover the law $(r+d)/(m+1)$.
\end{theorem}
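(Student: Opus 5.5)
The plan is to specialize Lemma~\ref{lem:bridge} with $Z_i=\delta_i$. Since i.i.d.\ scenarios are exchangeable, the lemma applies once we fix a reconstruction map, a reconstruction set and an exception set. We take
\[
B_n:=C_r(Z_{1:n}),\qquad E_n:=D_r(Z_{1:n}).
\]
For the reconstruction map, note that any tuple $z_{B_n}$ arises from some sample, so we define $\mathcal{R}$ through the decision itself:
\[
\mathcal{R}(z_{B_n(z_{1:n})}):=\{\delta\in\Delta:\ g(x_r(z_{1:n}),\delta)\le 0\}.
\]
This needs some care. $\mathcal{R}$ is supposed to be a function of the sub-tuple $z_{B_n}$ only. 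The cleanest route is to notice that the proof of Lemma~\ref{lem:bridge} uses only the composite object $n\mapsto \mathcal{R}(z_{B_n(z_{1:n})})$. So we can read the lemma with $\mathcal{R}(z_{B_n(z_{1:n})})$ replaced by the safe set $\mathcal{S}(z_{1:n}):=\{\delta: g(x_r(z_{1:n}),\delta)\le 0\}$. Measurability of this set follows from measurability of $x_r$ and $g$.

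Next we verify the two hypotheses of the lemma.
\begin{itemize}
\item Hypothesis (a): if $i\notin E_n=D_r$, then Assumption~\ref{ass:stable-support}(a) gives $g(x_r(z_{1:n}),z_i)\le 0$, so $z_i\in\mathcal{S}(z_{1:n})$.
\item Hypothesis (b): if $j\notin C_r\cup D_r$, then Assumption~\ref{ass:stable-support}(b) gives $x_r(z^{-j})=x_r(z)$, and therefore $\mathcal{S}(z^{-j})=\mathcal{S}(z)$.
\end{itemize}
Permutation equivariance of $B_n$ and $E_n$ is part of the assumption. Hence $\Gamma(S)=\mathcal{S}(S)$ and $\{\delta\notin\Gamma(S)\}=\{g(x_r(S),\delta)>0\}$. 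By the tower property,
\[
\mathbb{E}[V_r(S)]=\mathbb{P}\{\delta\notin\Gamma(S)\},
\]
and the lemma then yields the first displayed bound with $S^+$.

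The remaining claims follow directly.
\begin{itemize}
\item The union bound and $|D_r(S^+)|=r$ give $|D_r\cup C_r|\le r+|C_r(S^+)|$. Taking expectations gives the second bound.
\item If $|C_r|\le\zeta$ almost surely, the third bound $(r+\zeta)/(m+1)$ follows.
\item In the fully supported convex case, take $r=0$ in the basic sense and $C_r$ a support set of size at most $d$, as cited from \citet{CampiGaratti2018}. This recovers $(r+d)/(m+1)$.
\end{itemize}

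One point needs checking. The algorithm must be defined for sample sizes $m$ and $m+1$, with the same $r$ discarded at both sizes. Assumption~\ref{ass:stable-support}(b) implicitly compares $x_r$ on $n$ and $n-1$ points, so this is consistent.

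The main obstacle is this formal mismatch: in Lemma~\ref{lem:bridge} the map $\mathcal{R}$ is nominally a function of the reconstruction sub-tuple alone. I would handle it by observing that the lemma's proof only ever evaluates $\mathcal{R}$ at $z_{B_n(z_{1:n})}$. Alternatively, one can restate the lemma with $\Gamma_n(z_{1:n})$ directly in place of $\mathcal{R}(z_{B_n})$. The contradiction argument and the symmetrization step $\mathbb{P}\{m+1\in B\cup E\}=\mathbb{E}|B\cup E|/(m+1)$ are unchanged.
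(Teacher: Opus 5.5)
Your proposal is correct and is essentially the paper's proof: the paper does not formally invoke Lemma~\ref{lem:bridge} but reruns its contradiction-plus-symmetrization argument inline, with the safe set $\{\delta: g(x_r(\cdot),\delta)\le 0\}$ in place of $\mathcal{R}(z_{B_n})$. That is exactly your suggested restatement with $\Gamma_n$, and it sidesteps the formal mismatch you flagged; the remaining bounds then follow as you derive them, from $|D_r(S^+)|=r$ and the almost-sure bound $|C_r|\le\zeta$.
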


\begin{proof}
Since $\delta$ is independent of $S$,
\[
\mathbb{E}[V_r(S)] = \mathbb{P}\{g(x_r(S),\delta)>0\},
\]
where the probability is taken jointly over $S$ and $\delta$.
Consider the augmented sample $S^+=(\delta_1,\dots,\delta_m,\delta)$. 
Suppose that $g(x_r(S),\delta)>0$.
We claim that the added index $m+1$ must belong to $D_r(S^+)\cup C_r(S^+)$.
If not, then by Assumption~\ref{ass:stable-support}(b), 
\[
x_r(S^+)=x_r(S).
\]
Moreover, $m+1\notin D_r(S^+)$ means that the fresh scenario is retained, so by Assumption~\ref{ass:stable-support}(a) it must be satisfied by the decision computed from $S^+$:
\[
g(x_r(S^+),\delta)\le 0.
\]
Using $x_r(S^+)=x_r(S)$ gives $g(x_r(S),\delta)\le 0$, which contradicts the starting assumption.
Therefore, 
\[
\{g(x_r(S),\delta)>0\}\subseteq \{m+1\in D_r(S^+)\cup C_r(S^+)\}.
\]
Taking probabilities and using exchangeability of the augmented sample,
\[
\mathbb{P}\{m+1\in D_r(S^+)\cup C_r(S^+)\}
=
\frac{1}{m+1}\,\mathbb{E}\bigl[|D_r(S^+)\cup C_r(S^+)|\bigr].
\]
This proves the first bound.
The second follows from $|D_r(S^+)|=r$ and $C_r(S^+)\subseteq [m+1]\setminus D_r(S^+)$.
The last claim is immediate when $|C_r(S)|\le \zeta$ almost surely.
\end{proof}

\begin{remark}[Interpretation for control design]\label{rem:interpretation}\rm
When $r=0$ and the retained problem is fully supported, Theorem~\ref{thm:forward-discard} reduces to the conformal-to-scenario direction derived by \citet{OSullivanRomaoMargellos2026}.
The resulting mean-risk law is classical in scenario optimization, so the novelty here is not a sharper numerical bound.
The novelty is the direct conformal-style derivation for sample discarding, together with the observation that discarded samples play the role of admissible exceptions in the exchangeability argument.
At the level of a sampled finite-horizon control-design problem, the theorem says that a fresh violating scenario must be either removed or  decision-shaping.
This interpretation is consistent with the compression view of scenario methods in control \citep{MargellosPrandiniLygeros2015} and with sample-and-remove scenario MPC formulations \citep{SchildbachFagianoFreiMorari2014}.
\end{remark}

\begin{example}[One-dimensional order-statistic predictor]\label{ex:order-statistic}\rm
Let $S=(R_1,\dots,R_m)$ be a sample of i.i.d.
real-valued observations from a continuous distribution with cumulative distribution function $F$.
For $r\in\{0,\dots,m-1\}$, consider the scalar optimization problem 
\[
\min_{q\in \mathbb{R}} q
\qquad\text{subject to}\qquad
R_i\le q \text{ for all but }r\text{ indices }i.
\]
Its unique solution is the order statistic $q_r(S)=R_{(m-r)}$, where $R_{(1)}\le \cdots \le R_{(m)}$.
The associated upper predictor is $\Gamma_r(S)=(-\infty,q_r(S)]$, and its conditional violation probability is
\[
V_r(S)=\mathbb{P}\{R>q_r(S)\mid S\}=1-F(q_r(S)).
\]
Because one retained point is enough to reconstruct the solution, Theorem~\ref{thm:forward-discard} gives
\[
\mathbb{E}[V_r(S)]\le \frac{r+1}{m+1}.
\]
In fact equality holds.
Since $F(q_r(S))$ is the $(m-r)$th order statistic of $m$ i.i.d. $\mathrm{Unif}[0,1]$ variables,
\[
F(q_r(S))\sim \mathrm{Beta}(m-r,r+1),
\qquad
V_r(S)=1-F(q_r(S))\sim \mathrm{Beta}(r+1,m-r).
\]
Therefore,
\[
\mathbb{E}[V_r(S)]=\frac{r+1}{m+1},
\]
and, for every $\epsilon\in(0,1)$,
\[
\mathbb{P}_S\{V_r(S)\le \epsilon\}
=
1-\sum_{i=0}^{r}\binom{m}{i}\epsilon^i(1-\epsilon)^{m-i}.
\]
This is exactly the scalar order-statistic law behind one-sided split-conformal calibration.
\end{example}

\section{A modular rule for multi-output and multi-stage risk allocation} 

The scalar example above gives an explicit calibration-conditional certificate.
We now show how to combine such certificates across several blocks.

\begin{proposition}[Modular calibration rule]\label{prop:multirisk}
Let $S$ be a calibration sample.
For $k=1,\dots,N$, let $\Gamma_k(S)\subseteq \mathcal{Z}$ be measurable set predictors and define
\[
V_k(S):=\mathbb{P}\{Z\notin \Gamma_k(S)\mid S\},
\]
where $Z$ is an independent test point in $\mathcal{Z}$.
Suppose that, for each $k$, there exist $\epsilon_k\in(0,1)$ and $\delta_k\in(0,1)$ such that 
\[
\mathbb{P}_S\{V_k(S)\le \epsilon_k\}\ge 1-\delta_k.
\]
Then the intersection predictor
\[
\Gamma_{\cap}(S):=\bigcap_{k=1}^N \Gamma_k(S)
\]
satisfies
\[
\mathbb{P}_S\!\left\{\mathbb{P}\{Z\notin \Gamma_{\cap}(S)\mid S\}\le \sum_{k=1}^N \epsilon_k\right\}
\ge
1-\sum_{k=1}^N \delta_k.
\]
\end{proposition}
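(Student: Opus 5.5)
The plan is a two-layer union bound: one pointwise in the sample $S$ (over the test point $Z$), and one over the random sample $S$ itself. Independence of $Z$ from $S$ is what lets the first layer be stated in terms of the conditional risks $V_k(S)$.

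\textbf{Step 1: pointwise in $S$.} Fix a realization $S=s$. By De Morgan,
\[
\{Z\notin \Gamma_{\cap}(s)\}=\bigcup_{k=1}^N\{Z\notin \Gamma_k(s)\}.
\]
Since $Z$ is independent of $S$, the conditional law of $Z$ given $S=s$ is its marginal law. Subadditivity of that measure then gives
\[
\mathbb{P}\{Z\notin \Gamma_{\cap}(S)\mid S\}\le \sum_{k=1}^N V_k(S)
\]
almost surely. Measurability is needed for the conditional probabilities to make sense. It holds because each $\Gamma_k$ is a measurable set predictor, so the events $\{(s,z): z\notin\Gamma_k(s)\}$ are jointly measurable and their finite union is too.

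\textbf{Step 2: over $S$.} Let $A_k:=\{V_k(S)\le \epsilon_k\}$, so that $\mathbb{P}_S(A_k^c)\le \delta_k$ by hypothesis. On $\bigcap_k A_k$, Step~1 yields
\[
\mathbb{P}\{Z\notin\Gamma_\cap(S)\mid S\}\le \sum_k V_k(S)\le \sum_k\epsilon_k .
\]
This gives the inclusion
\[
\bigcap_k A_k\subseteq\Bigl\{\mathbb{P}\{Z\notin\Gamma_\cap(S)\mid S\}\le\sum_k\epsilon_k\Bigr\}.
\]
A second union bound, applied to the complements, gives
\[
\mathbb{P}_S\Bigl(\bigcap_k A_k\Bigr)\ge 1-\sum_k \mathbb{P}_S(A_k^c)\ge 1-\sum_k\delta_k,
\]
which is the claimed bound.

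No exchangeability or dependence structure among the $N$ blocks is required, since both layers are union bounds. In particular, the $\Gamma_k$ may all be built from the same calibration sample.

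The only delicate point is Step~1's identification of the conditional probability with a pointwise (in $s$) measure of a union. I would handle it by writing each conditional risk as $\int \mathbf{1}\{z\notin\Gamma_k(s)\}\,P_Z(dz)$ evaluated at $s=S$, using independence and Fubini. The two union bounds are then routine.
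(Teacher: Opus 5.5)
Your proposal is correct and follows the paper's proof. Both define the good events $\{V_k(S)\le\epsilon_k\}$ and use a union bound over $S$ to get confidence $1-\sum_k\delta_k$; on their intersection, subadditivity bounds the conditional risk of $\Gamma_{\cap}(S)$ by $\sum_k V_k(S)\le\sum_k\epsilon_k$. Your measurability and Fubini remark only makes explicit what the paper takes for granted.
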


\begin{proof}
Let $G_k:=\{V_k(S)\le \epsilon_k\}$ and $G:=\bigcap_{k=1}^N G_k$.
By the union bound,
\[
\mathbb{P}_S(G)\ge 1-\sum_{k=1}^N \delta_k.
\]
On the event $G$,
\[
\mathbb{P}\{Z\notin \Gamma_{\cap}(S)\mid S\}
=
\mathbb{P}\!\left(\bigcup_{k=1}^N \{Z\notin \Gamma_k(S)\}\,\middle|\,S\right)
\le
\sum_{k=1}^N V_k(S)
\le
\sum_{k=1}^N \epsilon_k.
\]
Taking probabilities over $S$ gives the claim. 
\end{proof}

\begin{remark}\rm
Proposition~\ref{prop:multirisk} is elementary---it is just a union bound---but its value is modularity.
It allows one to distribute a total risk budget across coordinates, subsystems, or time steps and then combine the resulting blocks into one certificate.
No independence across blocks is required.
In a receding-horizon controller, for example, each block could correspond to a future stage, an output component, or a separate safety specification.
\end{remark}

\begin{corollary}[Coordinate-wise upper boxes]\label{cor:boxes}
Let $Z=(Z^{(1)},\dots,Z^{(N)})$ be an $\mathbb{R}^N$-valued random vector, and let
\[
S=(Z_1,\dots,Z_m)
\]
be an i.i.d. calibration sample.
For each coordinate $k$, let
\[
Z_{(1)}^{(k)}\le \cdots \le Z_{(m)}^{(k)}
\]
be the order statistics of $Z_1^{(k)},\dots,Z_m^{(k)}$, choose an integer $r_k\in\{0,\dots,m-1\}$, and define
\[
q_k(S):=Z_{(m-r_k)}^{(k)},
\qquad
\Gamma_k(S):=\{z\in \mathbb{R}^N:\ z^{(k)}\le q_k(S)\}.
\]
Assume that each marginal distribution of $Z^{(k)}$ is continuous.
Then, for every choice of $\epsilon_k\in(0,1)$,
\[
\mathbb{P}_S\{V_k(S)\le \epsilon_k\}
=
1-\delta_k,
\qquad
\delta_k:=\sum_{i=0}^{r_k}\binom{m}{i}\epsilon_k^i(1-\epsilon_k)^{m-i},
\]
where $V_k(S)=\mathbb{P}\{Z\notin \Gamma_k(S)\mid S\}$.
Hence the rectangular predictor
\[
\Gamma_{\Box}(S):=\bigcap_{k=1}^N \Gamma_k(S)=\prod_{k=1}^N (-\infty,q_k(S)]
\]
satisfies
\[
\mathbb{P}_S\!\left\{\mathbb{P}\{Z\notin \Gamma_{\Box}(S)\mid S\}\le \sum_{k=1}^N \epsilon_k\right\}
\ge
1-\sum_{k=1}^N \delta_k.
\]
The result does not require the coordinates of $Z$ to be independent.
\end{corollary}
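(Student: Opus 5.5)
The plan is to reduce each coordinate to the scalar order-statistic law of Example~\ref{ex:order-statistic}, and then to apply Proposition~\ref{prop:multirisk} to combine the coordinates.

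Fix a coordinate $k$. Since $Z_1,\dots,Z_m$ are i.i.d. copies of $Z$, the scalar projections $Z_1^{(k)},\dots,Z_m^{(k)}$ are i.i.d. with the continuous marginal cdf $F_k$ of $Z^{(k)}$. The event $Z\notin\Gamma_k(S)$ depends only on the $k$th coordinate, since it is exactly $\{Z^{(k)}>q_k(S)\}$. Because the test point $Z$ is independent of $S$, we get
\[
V_k(S)=1-F_k\bigl(q_k(S)\bigr),
\]
with $q_k(S)=Z^{(k)}_{(m-r_k)}$. This is precisely the setting of Example~\ref{ex:order-statistic} with $R_i=Z_i^{(k)}$ and $r=r_k$.

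Next I use the probability integral transform. Continuity of $F_k$ implies that the $U_i=F_k(Z_i^{(k)})$ are i.i.d. uniform on $[0,1]$. Since $F_k$ is monotone, $F_k(Z^{(k)}_{(m-r_k)})=U_{(m-r_k)}$; ties in the $Z$'s are harmless because $F_k$ is nondecreasing. Hence $V_k(S)=1-U_{(m-r_k)}\sim\mathrm{Beta}(r_k+1,m-r_k)$.

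The exact tail then follows by counting. The event $\{V_k(S)>\epsilon_k\}=\{U_{(m-r_k)}<1-\epsilon_k\}$ occurs if and only if at least $m-r_k$ of the $U_i$ fall below $1-\epsilon_k$. Equivalently, at most $r_k$ of them exceed $1-\epsilon_k$, and each one does so independently with probability $\epsilon_k$. This gives
\[
\mathbb{P}_S\{V_k(S)>\epsilon_k\}=\sum_{i=0}^{r_k}\binom{m}{i}\epsilon_k^i(1-\epsilon_k)^{m-i}=\delta_k .
\]
Strict versus non-strict inequalities do not matter here, because the Beta law is continuous. So $\mathbb{P}_S\{V_k(S)\le\epsilon_k\}=1-\delta_k$, as claimed.

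For the joint statement, the identity $\bigcap_k\Gamma_k(S)=\prod_k(-\infty,q_k(S)]$ is immediate from the definitions. The certificates above satisfy the hypotheses of Proposition~\ref{prop:multirisk}, and applying it yields the stated bound directly. That proposition uses only a union bound, so no dependence structure across coordinates enters, which justifies the final remark that the coordinates need not be independent.

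The only step needing care is the reduction $V_k(S)=1-F_k(q_k(S))$ together with the distributional identity for $F_k$ of the order statistic. This is where the continuity of the marginal is essential, since it makes the $F_k(Z_i^{(k)})$ exactly uniform. Everything else is routine.
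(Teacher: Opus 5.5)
Your proof is correct and follows the paper's route: you reduce each coordinate to the scalar order-statistic law of Example~\ref{ex:order-statistic}, which gives $V_k(S)\sim\mathrm{Beta}(r_k+1,m-r_k)$ and the binomial tail, and then you invoke Proposition~\ref{prop:multirisk}. The only difference is that you write out the probability-integral-transform and counting steps explicitly, whereas the paper takes them directly from the example.
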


\begin{proof}
For each coordinate, Example~\ref{ex:order-statistic} gives
\[
V_k(S)\sim \mathrm{Beta}(r_k+1,m-r_k),
\]
and therefore
\[
\mathbb{P}_S\{V_k(S)\le \epsilon_k\}
=
1-\sum_{i=0}^{r_k}\binom{m}{i}\epsilon_k^i(1-\epsilon_k)^{m-i}.
\]
The last claim follows from Proposition~\ref{prop:multirisk}.
\end{proof}

\subsection{Sharper bounds under coordinate independence}

The modular rule in Proposition~\ref{prop:multirisk} and Corollary~\ref{cor:boxes} relies on a union bound, which guarantees validity even when the coordinates of $Z$ are highly dependent. However, if the prediction errors are known to be strictly independent across coordinates, we can replace the additive bounds with tighter multiplicative ones.

\begin{corollary}[Multiplicative coordinate-wise boxes]\label{cor:independent-boxes}
In the setting of Corollary~\ref{cor:boxes}, assume additionally that the coordinates $Z^{(1)},\dots,Z^{(N)}$ of the random vector $Z$ are mutually independent, and that this independence holds across the coordinates of the calibration sample $S$ as well.
Then, the rectangular predictor $\Gamma_{\Box}(S)$ satisfies 
\[
\mathbb{P}_S\!\left\{\mathbb{P}\{Z\notin \Gamma_{\Box}(S)\mid S\}\le 1 - \prod_{k=1}^N (1 - \epsilon_k)\right\}
\ge
\prod_{k=1}^N (1 - \delta_k).
\]
\end{corollary}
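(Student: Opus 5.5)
The plan is to exploit independence twice: once at the level of the test point, to turn the conditional miss probability of the box into a product, and once at the level of the calibration sample, to turn the probability that all blocks are good into a product.

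\emph{Step 1: the conditional coverage of the box factorizes.}
Each threshold $q_k(S)$ is a function only of the $k$th coordinate column $S^{(k)}:=(Z^{(k)}_1,\dots,Z^{(k)}_m)$. The test point $Z$ is independent of $S$, and its coordinates are mutually independent. Conditionally on $S$, the thresholds are therefore fixed numbers, and
\[
\mathbb{P}\{Z\in \Gamma_{\Box}(S)\mid S\}=\mathbb{P}\Bigl(\bigcap_{k=1}^N\{Z^{(k)}\le q_k(S)\}\,\Big|\,S\Bigr)=\prod_{k=1}^N \bigl(1-V_k(S)\bigr),
\]
where $V_k(S)=1-F_k(q_k(S))$ and $F_k$ is the marginal CDF of $Z^{(k)}$. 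It follows that
\[
\mathbb{P}\{Z\notin \Gamma_{\Box}(S)\mid S\}=1-\prod_{k=1}^N\bigl(1-V_k(S)\bigr).
\]

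\emph{Step 2: pass to the good events.}
Let $G_k:=\{V_k(S)\le\epsilon_k\}$. The map $(v_1,\dots,v_N)\mapsto 1-\prod_k(1-v_k)$ is nondecreasing in each $v_k\in[0,1]$. Hence, on $\bigcap_k G_k$, the conditional miss probability is at most $1-\prod_k(1-\epsilon_k)$. It therefore suffices to show
\[
\mathbb{P}_S\Bigl(\bigcap_{k=1}^N G_k\Bigr)\ge\prod_{k=1}^N(1-\delta_k).
\]

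\emph{Step 3: the good events are independent.}
The sample points $Z_1,\dots,Z_m$ are i.i.d., and within each point the coordinates are independent. Together these give that the columns $S^{(1)},\dots,S^{(N)}$ are mutually independent; this is exactly the hypothesis that ``independence holds across the coordinates of the calibration sample.'' Since $G_k$ is $\sigma(S^{(k)})$-measurable, the events $G_k$ are independent, so
\[
\mathbb{P}_S\Bigl(\bigcap_k G_k\Bigr)=\prod_k\mathbb{P}_S(G_k).
\]
Corollary~\ref{cor:boxes}, via Example~\ref{ex:order-statistic}, gives $\mathbb{P}_S(G_k)=1-\delta_k$, which yields the claimed product, in fact with equality.

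The only delicate point is this column independence in Step 3. I would spell it out: the joint law of $S$ is $\bigotimes_{i=1}^m\bigotimes_{k=1}^N \mathcal{L}(Z^{(k)})$, and regrouping the factors by $k$ shows that the columns are independent. The rest is monotonicity and the Beta law already established.
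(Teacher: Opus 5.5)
Your proposal is correct and follows the paper's proof essentially step for step. You factor the conditional coverage using independence of the test coordinates, factor the probability of $\bigcap_k\{V_k(S)\le\epsilon_k\}$ using column independence of the calibration sample together with the Beta law, and conclude by monotonicity of $1-\prod_k(1-v_k)$. Your product-measure justification of column independence, and the explicit monotonicity remark, only spell out what the paper leaves implicit.
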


\begin{proof}
Given the calibration sample $S$, the test point $Z$ falls within the box $\Gamma_{\Box}(S)$ if and only if $Z^{(k)} \le q_k(S)$ for all $k=1,\dots,N$.
By the assumed independence of the test coordinates, the conditional probability of this joint event is the product of the marginal probabilities:
\[
\mathbb{P}\{Z\in \Gamma_{\Box}(S)\mid S\} 
= \prod_{k=1}^N \mathbb{P}\{Z^{(k)}\le q_k(S)\mid S\} 
= \prod_{k=1}^N (1 - V_k(S)).
\]
Thus, the joint conditional risk is $V_{\Box}(S) = 1 - \prod_{k=1}^N (1 - V_k(S))$.
Furthermore, because the calibration sample $S$ also has mutually independent coordinates, the order statistics $q_k(S)$—and consequently the marginal risks $V_k(S)$—are independent random variables across $k$.
Therefore, the joint confidence over the calibration phase is:
\[
\mathbb{P}_S\!\left\{\bigcap_{k=1}^N \{V_k(S)\le \epsilon_k\}\right\} 
= \prod_{k=1}^N \mathbb{P}_S\{V_k(S)\le \epsilon_k\} 
= \prod_{k=1}^N (1 - \delta_k).
\]
On the event $\bigcap_{k=1}^N \{V_k(S)\le \epsilon_k\}$, we have $V_{\Box}(S) \le 1 - \prod_{k=1}^N (1 - \epsilon_k)$.
The result follows. 
\end{proof}

\begin{remark}\rm
This multiplicative risk threshold is strictly smaller (tighter) than the summation bound $\sum_{k=1}^N \epsilon_k$.
Similarly, the multiplicative confidence is strictly larger (tighter) than the summation bound $1 - \sum_{k=1}^N \delta_k$.
While highly advantageous, this formulation is strictly reserved for applications where coordinate independence can be reliably verified.
\end{remark}

\begin{remark}[Residual scores and identified models]\label{rem:residual-scores}\rm
The order-statistic mechanism does not require the score to be the raw variable itself.
If $R_{j,k}$ is any continuous scalar score, then its $(m-r_k)$th order statistic has the same beta law as in Example~\ref{ex:order-statistic}.
In control and identification, the most natural choice is often the absolute residual score $R_{j,k}=|y_{j,k}-\hat y_{j,k}|$ at horizon $k$.
\end{remark}

\begin{example}[Calibrated multi-step output tube around an identified model]\label{ex:tube}\rm
Let $\Xi$ denote the current regressor together with any planned future inputs, let $Y\in \mathbb{R}^H$ be the corresponding future output vector, and let $Z=(\Xi,Y)$ denote a prediction task.
Assume that the nominal predictor $\hat y_{1:H}=f_{\mathrm{id}}(\Xi)$ has been fitted on training data independent of calibration.
Let
\[
S=\{(\xi_j,Y_j)\}_{j=1}^m
\]
be an exchangeable calibration sample of prediction tasks, and define the horizon-wise residual scores
\[
R_{j,k}:=|Y_j^{(k)}-\hat y_k(\xi_j)|,
\qquad k=1,\dots,H.
\]
Choose integers $r_k\in\{0,\dots,m-1\}$ and let $q_k(S)$ be the $(m-r_k)$th order statistic of $R_{1,k},\dots,R_{m,k}$.
By Remark~\ref{rem:residual-scores}, each $q_k(S)$ has the order-statistic law from Example~\ref{ex:order-statistic}.
Hence the blockwise predictors
\[
\Gamma_k(S):=\{(\xi,y):\ |y^{(k)}-\hat y_k(\xi)|\le q_k(S)\}
\]
come with explicit calibration-conditional certificates, and Proposition~\ref{prop:multirisk} yields the joint guarantee
\[
\mathbb{P}_S\!\left\{\mathbb{P}\{Y\notin \Gamma_{\mathrm{tube}}(S;\Xi)\mid S\}\le \sum_{k=1}^H \epsilon_k\right\}
\ge 1-\sum_{k=1}^H \delta_k,
\]
where the calibrated tube at a given regressor $\xi$ is
\[
\Gamma_{\mathrm{tube}}(S;\xi):=\{y\in \mathbb{R}^H:\ |y^{(k)}-\hat y_k(\xi)|\le q_k(S),\ k=1,\dots,H\}
=\prod_{k=1}^H [\hat y_k(\xi)-q_k(S),\,\hat y_k(\xi)+q_k(S)].
\]
For a scalar upper output constraint $y_k\le y_{\max}$, the tube can be used in predictive control by imposing the tightened nominal condition
\[
\hat y_k(\xi,u)\le y_{\max}-q_k(S),
\qquad k=1,\dots,H.
\]
The modular rule then allocates the total risk budget across prediction steps through the choice of $(\epsilon_k,\delta_k)$.
\end{example}

\begin{remark}[Conditional guarantees and impossibility results]\rm
The certificate in Corollary~\ref{cor:boxes} and Example~\ref{ex:tube} should not be confused with a generic conditional guarantee based only on observed support size.
In scenario optimization, such generic conditional claims are impossible without additional information \citep{GarattiCampi2023Conditional}.
Here the situation is different: each block has an explicit conditional law inherited from an order-statistic score, and Proposition~\ref{prop:multirisk} merely combines those explicit blockwise certificates.
\end{remark}

\section{Numerical illustrations}\label{sec:numerics} 

This section illustrates Example~\ref{ex:tube} on a simple identified-predictor setting.
We compare several horizon-wise allocations of the same total risk budget and then embed the resulting calibrated tubes in a simple planning problem.

\subsection{Risk allocation across the prediction horizon}

We use the nonlinear stochastic plant
\[
x_{t+1}=0.78x_t+0.35u_t+0.12x_tu_t+w_t,
\qquad y_t=x_t,
\]
where $w_t\sim \mathcal{N}(0,0.08^2)$.
The nominal predictor is the fixed linear surrogate
\[
\hat y_{t+1}=\hat a y_t+\hat b u_t,
\qquad
\hat a=0.7799,\quad \hat b=0.3491,
\]
which is treated as the output of an identification step performed on data independent of the calibration and test tasks.
Each prediction task consists of an initial output $y_0\sim \mathcal{N}(0,1)$ and a planned input sequence $u_0,\dots,u_3$, i.i.d.
uniform on $[-1,1]$. 
The nominal model is iterated over four steps to produce $\hat y_{1:4}$.
A calibration set contains $m=120$ independent tasks, and the score at horizon $k$ is the absolute residual
\[
R_{j,k}=|y_{j,k}-\hat y_{j,k}|.
\]

We compare three modular allocations, all using the same total risk budget $\sum_{k=1}^4 \epsilon_k=0.22$ and the same total rank sum $\sum_{k=1}^4 r_k=6$:
\[
\text{increasing risk: } (r_1,r_2,r_3,r_4)=(0,1,2,3),\quad
(\epsilon_1,\epsilon_2,\epsilon_3,\epsilon_4)=(0.04,0.05,0.06,0.07),
\]
\[
\text{uniform risk: } (r_1,r_2,r_3,r_4)=(1,1,2,2),\quad
(\epsilon_1,\epsilon_2,\epsilon_3,\epsilon_4)=(0.055,0.055,0.055,0.055),
\]
\[
\text{decreasing risk: } (r_1,r_2,r_3,r_4)=(3,2,1,0),\quad
(\epsilon_1,\epsilon_2,\epsilon_3,\epsilon_4)=(0.07,0.06,0.05,0.04).
\]
The terminology refers to the intended stage-wise allocation of allowable risk along the horizon.
For each allocation we generated $1{,}000$ independent calibration sets.
For every calibration set, the stage-wise risks $V_k(S)$ and the joint trajectory risk
\[
V_{\mathrm{traj}}(S):=\mathbb{P}\{Y_{1:4}\notin \Gamma_{\mathrm{tube}}(S;\Xi)\mid S\}
\]
were approximated by $5{,}000$ fresh test tasks.
Figure~\ref{fig:risk-allocation} shows how the modular rule redistributes conservatism across the horizon.
As expected, the increasing-risk profile yields the smallest margins at later steps, whereas the decreasing-risk profile shifts conservatism toward the end of the horizon.
The empirical stage-wise risks follow the opposite pattern.
Table~\ref{tab:allocations} summarizes the resulting certificates and trajectory-risk statistics.
The mean trajectory risk remains close to $0.06$ in all three cases, but the allocation of that risk across stages changes substantially.
This is exactly the design freedom offered by Proposition~\ref{prop:multirisk}: one can preserve an overall certificate while reshaping local margins across time.
For reference, we also compared these modular tubes with a joint baseline obtained by calibrating the supremum residual across the horizon to the same total risk budget.
In our simulations that baseline produced a slightly tighter but uniform tube, whereas the modular construction was marginally more conservative and, crucially, allowed deliberate stage-dependent shaping of the margins.

\subsubsection*{Evaluating the Multiplicative Savings under Independence}

To quantify the theoretical savings of the multiplicative rule from Corollary~\ref{cor:independent-boxes}, suppose for a moment that the prediction errors across the $H=4$ horizon steps were statistically independent.
In the ``uniform risk'' allocation, the additive rule relies on a stage-wise risk of $\epsilon_k = 0.055$, yielding a total trajectory risk bound of $\sum_{k=1}^4 \epsilon_k = 0.22$.
Under the assumption of independence, the exact same stage-wise margins would yield a tighter trajectory risk of:
\[
V_{\mathrm{traj}}^{\mathrm{mult}} = 1 - \prod_{k=1}^4 (1 - 0.055) = 1 - (0.945)^4 \approx 0.2026.
\]
This represents a purely mathematical reduction of about 1.74\% in the certified upper bound of the trajectory risk, without changing the physical tube at all.
Viewed from a design perspective, if the control engineer's target budget is strictly $0.22$, the multiplicative rule allows for a uniformly larger stage-wise risk $\epsilon_k$.
Solving $1 - (1 - \epsilon_k)^4 = 0.22$ gives $\epsilon_k \approx 0.061$ for each step (compared to $0.055$ in the additive case).
This increased local budget allows the controller to discard more samples (higher $r_k$) or select smaller order statistics $q_k(S)$, resulting in a uniformly tighter and less conservative tube while maintaining the exact same trajectory-level safety certificate.

\begin{figure}[t]
\centering
\includegraphics[width=0.98\linewidth]{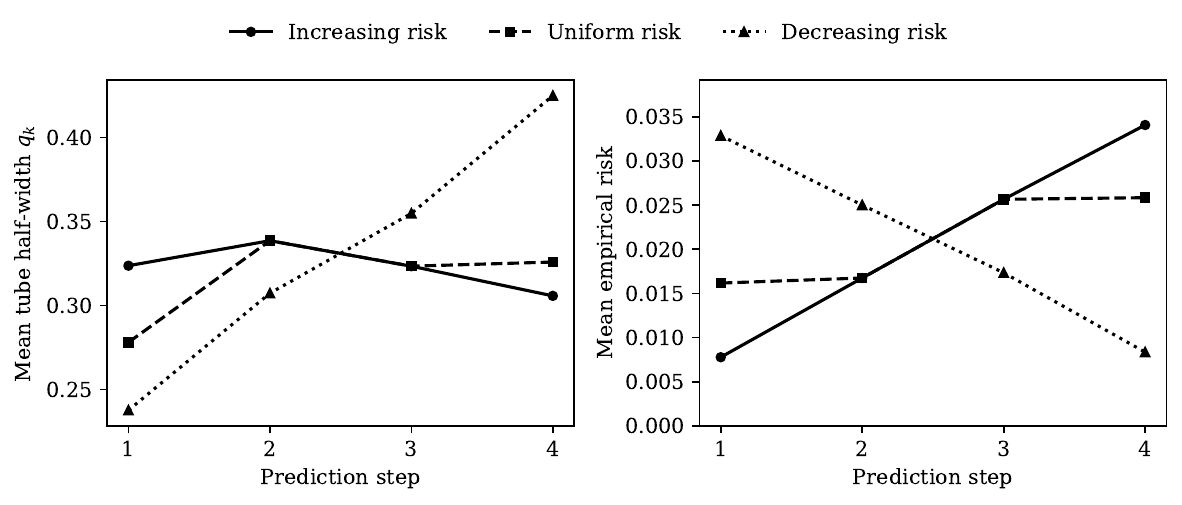}
\caption{Effect of modular allocation across the horizon. Left: mean tube half-widths $q_k$. Right: mean empirical stage-wise risks $V_k(S)$.
The three allocations use the same total risk budget and the same total rank sum, but redistribute conservatism across prediction steps.}
\label{fig:risk-allocation}
\end{figure}

\begin{table}[t]
\centering
\scriptsize
\caption{Alternative horizon-wise allocations for the calibrated four-step tube.
All three choices use $\sum_k \epsilon_k=0.22$ and $\sum_k r_k=6$.}
\label{tab:allocations}
\begin{tabular}{lcccccc}
\toprule
allocation & $(r_1,\ldots,r_4)$ & $(\epsilon_1,\ldots,\epsilon_4)$ & certificate $1-\sum_k\delta_k$ & mean $V_{\mathrm{traj}}$ & $Q_{0.90}(V_{\mathrm{traj}})$ & $Q_{0.99}(V_{\mathrm{traj}})$ \\
\midrule
increasing risk & $(0,1,2,3)$ & $(0.04,0.05,0.06,0.07)$ & 0.9264 & 0.0618 & 0.0932 & 0.1288 \\
uniform risk    & $(1,1,2,2)$ & $(0.055,0.055,0.055,0.055)$ & 0.9095 & 0.0613 & 0.0930 & 0.1270 \\
decreasing risk & $(3,2,1,0)$ & $(0.07,0.06,0.05,0.04)$ & 0.9264 & 0.0626 & 0.0940 & 0.1268 \\
\bottomrule
\end{tabular}
\end{table}

\subsection{From calibrated tubes to constraint tightening}

To make the control interpretation explicit, we now translate each calibrated tube into a simple finite-horizon planning problem.
Fix the initial condition $y_0=0.1$ and the upper output limit $y_{\max}=0.7$.
We consider the scalar optimization problem
\[
\max_{0\le u\le 1} u
\qquad\text{subject to}\qquad
\hat y_k(y_0,u)\le y_{\max}-q_k(S),\quad k=1,\dots,4,
\]
where a constant input $u$ is applied over the whole horizon and $q_k(S)$ is the calibrated margin obtained from one of the three allocations above.
This is a simple surrogate of a predictive-control step: the decision variable is scalar, but the role of the calibrated margins is exactly the same as in tube-based constraint tightening.
For each calibration set we solved this problem and obtained an admissible input $u^\star(S)$.
We then evaluated the true nonlinear plant under this chosen input over $4{,}000$ fresh noise realizations.
Table~\ref{tab:planning} reports the resulting performance and safety statistics.
The increasing-risk allocation is the least conservative: it yields the largest mean admissible input, $0.3491$, and the largest mean terminal output, $0.4190$, but also the largest mean probability of violating the true output limit, $0.0174$.
The decreasing-risk allocation reverses this trade-off: by allocating more conservatism to the later prediction steps, it reduces the mean admissible input to $0.2373$ and lowers the mean violation probability to $0.0025$.
The uniform allocation lies in between.

These numbers show that the modular rule is not merely a bookkeeping device.
Even when the total risk budget is the same, different stage-wise allocations induce materially different tightened constraints and therefore materially different control actions.
Figure~\ref{fig:control-tube} shows one representative tightened tube for the increasing-risk allocation together with several true trajectories generated under the resulting constant input.

\begin{table}[t]
\centering
\scriptsize
\caption{Simple planning problem with calibrated constraint tightening. The controller chooses the largest constant input $u^\star\in[0,1]$ such that $\hat y_k(y_0,u^\star)\le y_{\max}-q_k(S)$ for $k=1,\dots,4$, with $y_0=0.1$ and $y_{\max}=0.7$.}
\label{tab:planning}
\begin{tabular}{lcccccc}
\toprule
allocation & mean $u^\star$ & $Q_{0.10}(u^\star)$ & $Q_{0.90}(u^\star)$ & mean viol.
prob. & $Q_{0.90}(\text{viol. prob.})$ & mean terminal output \\
\midrule
increasing risk & 0.3491 & 0.2992 & 0.3925 & 0.0174 & 0.0353 & 0.4190 \\
uniform risk    & 0.3331 & 0.2801 & 0.3778 & 0.0123 & 0.0260 & 0.4004 \\
decreasing risk & 0.2373 & 0.1219 & 0.3244 & 0.0025 & 0.0068 & 0.2923 \\
\bottomrule
\end{tabular}
\end{table}

\begin{figure}[t]
\centering
\includegraphics[width=0.84\linewidth]{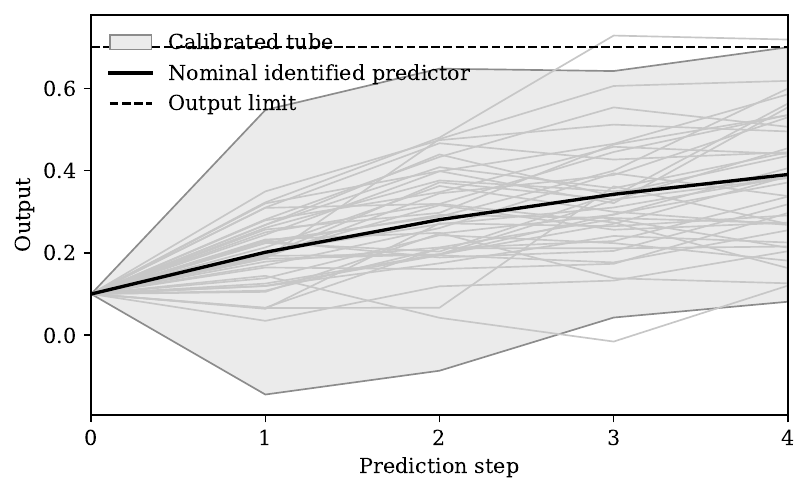}
\caption{Representative tightened tube for the increasing-risk allocation in the simple planning problem.
The band is computed from one calibration set, the solid line is the nominal identified trajectory under the selected constant input $u^\star$, the gray lines are true trajectories of the nonlinear plant, and the dashed line is the output limit.}
\label{fig:control-tube}
\end{figure}

\FloatBarrier

\section{Conclusions}

This note revisited the connection between conformal prediction and scenario optimization through a dedicated systems-and-control lens.
Our primary theoretical result is a forward bridge for sample discarding.
Specifically, if a feasible sample-and-discard algorithm admits a stable retained reconstruction set, the same exchangeability logic used in conformal calibration yields a mean-risk bound defined entirely in terms of discarded and reconstruction samples.
For the fully supported convex case, this directly recovers the standard $(r+d)/(m+1)$ law.
Our second main result is a modular calibration rule designed to combine several blockwise certificates.
While the rule itself is elementary, it performs a crucial function: making risk allocation explicit across outputs, constraints, or prediction steps.
This is a highly practical tool whenever an identified or learned predictor must be converted into margins that a downstream controller can actually use.
As our numerical illustrations demonstrated, this modularity is operationally meaningful.
Even when constrained by the exact same total risk budget, different stage-wise allocations actively produce different tubes, altered constraint tightenings, and shifted performance-safety trade-offs in a planning problem.
Our contribution has clear limitations. The forward bridge yields an expectation bound rather than a general, sharp arbitrary-$r$ PAC tail law, and the modular composition rule relies fundamentally on a union bound.
These limitations immediately point to two natural next steps for future research.
First, exploring sharper joint constructions that actively exploit dependence across blocks to reduce conservatism.
Second, investigating closed-loop extensions where calibration, prediction, and control seamlessly interact over time.

\end{document}